\begin{document}

\title{Arc-preserving subsequences of arc-annotated sequences}
\maketitle

\oneauthor{%
\href{http://ai.math.usu.ru/site/}{Vladimir Yu. POPOV}
}{%
\href{http://www.usu.ru}{Department of Mathematics and Mechanics} \\
Ural State University \\ 620083 Ekaterinburg, RUSSIA
}{%
 \href{mailto:Vladimir.Popov@usu.ru}{Vladimir.Popov@usu.ru}
}

\short{%
V. Popov
}{%
Arc-preserving subsequences}

\begin{abstract}
Arc-annotated sequences are useful in representing the structural
information of RNA and protein sequences. The longest
arc-pre\-ser\-ving common subsequence problem has been introduced
as a framework for studying the similarity of arc-annotated
sequences. In this paper, we consider arc-annotated sequences with
various arc structures. We consider the longest arc preserving common subsequence 
problem. In particular, we show 
that the decision version of the $1$-{\sc
fragment  LAPCS(cross\-ing,cha\-in)} and the decision version of the $0$-{\sc diagonal
LAPCS(cross\-ing,cha\-in)} are {\bf NP}-complete for some fixed alphabet $\Sigma$ such that
$\vert \Sigma \vert = 2$.
Also we show that if $\vert \Sigma \vert = 1$, then the decision version of the $1$-{\sc
fragment  LAPCS(un\-li\-mi\-ted, plain)} and the decision version of the $0$-{\sc diagonal
LAPCS(un\-li\-mi\-ted, plain)} are {\bf NP}-complete.
\end{abstract}


\section{Introduction}

Algorithms on sequences of symbols have been studied for a long time
and now form a fundamental part of computer science. One of the very
important problems in analysis of sequences is the longest common
subsequence ({\sc LCS}) problem. The computational problem of
finding the longest common subsequence of a set of $k$ strings has
been studied extensively over the last thirty years (see
\cite{BDFW,Hir83,IF92} and references). This problem has many
applications. When $k=2$, the longest common subsequence is a
measure of the similarity of two strings and is thus useful in
molecular biology, pattern recognition, and text compression
\cite{LF78,Mai78,San72}. The version of {\sc LCS} in which the
number of strings is unrestricted is also useful in text compression
\cite{Mai78}, and is a special case of the multiple sequence
alignment and consensus subsequence discovery problem in molecular
biology \cite{DM93a,DM93b,Pev92}.

The $k$-unrestricted {\sc LCS} problem is {\bf NP}-complete
\cite{Mai78}. If the number of sequences is fixed at $k$ with
maximum length $n$, their longest common subsequence can be found in
$O(n^{k-1})$ time, through an extension of the pairwise algorithm
\cite{IF92}. Suppose $\vert S_1 \vert =n$  and $\vert S_2 \vert =m$,
the longest common subsequence of $S_1$   and $S_2$ can be found in
time $O(nm)$ \cite{CLRS,Hir,WF}.

Sequence-level investigation has become essential in modern
molecular bi\-o\-lo\-gy. But to consider genetic molecules only as
long sequences consisting of the 4 basic constituents is too simple
to determine the function and physical structure of the molecules.
Additional information about the sequences should be added to the
sequences. Early works with these additional information are primary
structure based, the sequence comparison is basically done on the
primary structure while trying to incorporate secondary structure
data \cite{BMR,CM}. This approach has the weakness that it does not
treat a base pair as a whole entity. Recently, an improved model was
proposed \cite{Ev1,Ev2}.

Arc-annotated sequences are useful in describing the secondary and
tertiary structures of RNA and protein sequences. See
\cite{Ev1,N13,G,JLMZ,N11} for further discussion and references.
Structure comparison for RNA and for protein sequences has become a
central computational problem bearing many challenging computer science
questions. In this context, the longest arc preserving common subsequence 
problem ({\sc LAPCS}) recently has received considerable attention
\cite{Ev1,Ev2,JLMZ,N11,LCJW}. It is a sound and meaningful mathematical formalization
of comparing the secondary structures of molecular sequences. Studies 
for this problem have been undertaken in \cite{BDFW,G,N2,N3,N1,N4,N5,N6,N7,N12,N8,N9,N10}.

\section{Preliminaries and problem definitions}

Given two sequences $S$ and $T$ over some fixed alphabet $\Sigma$,
the sequence $T$ is a subsequence of $S$ if $T$ can be obtained from
$S$ by deleting some letters from $S$. Notice that the order of the
remaining letters of $S$ bases must be preserved. The length of a
sequence $S$ is the number of letters in  it  and is denoted as
$\vert S \vert$. For simplicity, we use $S[i]$ to denote the $i$th
letter in sequence $S$, and $S[i,j]$ to denote the substring of $S$
consisting of the $i$th letter through the $j$th letter.

Given two sequences $S_1$ and $S_2$ (over some fixed alphabet
$\Sigma$), the classic longest common subsequence  problem asks for
a longest sequence $T$ that is a subsequence of both $S_1$ and
$S_2$.

An arc-annotated sequence of length $n$ on a finite alphabet $\Sigma$ is a couple
$A=(S,P)$ where $S$ is a sequence of length $n$ on $\Sigma$ and $P$ is a set of pairs
$(i_1,i_2)$, with $1 \leq i_1 < i_2 \leq n$.
In
this paper we will then call
an element of $S$ a base. A pair $(i_1,i_2) \in P$ represents an arc linking bases
$S[i_1]$ and $S[i_2]$ of $S$. The bases $S[i_1]$ and $S[i_2]$ are said to belong to the arc 
$(i_1,i_2)$ and are the only bases that belong to this arc.

Given two annotated sequences $S_1$ and $S_2$ with arc sets $P_1$
and $P_2$ respectively, a common subsequence $T$ of $S_1$ and $S_2$
induces a bijective mapping from a subset of
$\{ 1, \dots , \vert
S_1 \vert \}$ to subset of
$\{ 1, \dots , \vert S_2 \vert \}$.
The common subsequence $T$ is arc-preserving if the arcs induced by
the mapping are preserved, i.e., for any $( i_1, j_1)$ and $(i_2,
j_2)$ in the mapping,
$$(i_1,i_2) \in P_1 \Leftrightarrow (j_1,j_2) \in
P_2.$$

The {\sc LAPCS} problem 
is to find a longest common subsequence of $S_1$ and $S_2$ that is
arc-preserving (with respect to the given arc sets $P_1$ and $P_2$)
\cite{Ev1}.

{\sc { LAPCS}}:

{\sc Instance:} An alphabet $\Sigma$, annotated sequences $S_1$ and
$S_2$, $S_1, S_2 \in {\Sigma}^{\ast}$, with arc sets $P_1$ and $P_2$
respectively.

{\sc Question:} Find a longest common subsequence of $S_1$ and $S_2$
that is arc-preserving.

The arc structure can be restricted. We consider the following four
natural restrictions on an arc set $P$ which are first discussed in
\cite{Ev1}:

1. no sharing of endpoints:\\
\phantom{4.......} $\forall (i_1,i_2), (i_3,i_4)   \in P, i_1 \not=
i_4, i_2 \not= i_3,$ and $i_1=i_3 \Leftrightarrow i_2=i_4$.

2. no crossing:\\
\phantom{4.......} $\forall (i_1,i_2), (i_3,i_4)   \in P, i_1 \in
[i_3,i_4] \Leftrightarrow i_2 \in [i_3,i_4]$.

3. no nesting:\\
\phantom{4.......} $\forall (i_1,i_2), (i_3,i_4)   \in P, i_1 \leq
i_3 \Leftrightarrow i_2 \leq i_3$.

4. no arcs:\\
\phantom{4.......} $P= \emptyset$.

These restrictions are used progressively and inclusively to produce
five distinct levels of permitted arc structures for {\sc LAPCS}:

-- {\sc unlimited} --- no restrictions;

-- {\sc crossing} --- restriction 1;

-- {\sc nested} --- restrictions 1 and 2;

-- {\sc chain} --- restrictions 1, 2 and 3;

-- {\sc plain} --- restriction 4.

The problem {\sc LAPCS} is varied by these different levels of
restrictions as {\sc LAPCS}$(x,y)$ which is problem {\sc LAPCS} with
$S_1$ having restriction level $x$ and $S_2$ having restriction
level $y$. Without loss of generality, we always assume that $x$ is
the same level or higher than $y$.

We give the definitions of two special cases of the {\sc LAPCS}
problem, which were first studied in \cite{LCJW}. The special cases
are motivated from biological applications \cite{Gus,LMW}.

{\sc The $c$-fragment { LAPCS} problem $(c \geq 1)$}:

{\sc Instance:} An alphabet $\Sigma$, annotated sequences $S_1$ and
$S_2$, $S_1, S_2 \in {\Sigma}^{\ast}$,
 with arc sets
$P_1$ and $P_2$ respectively, where $S_1$ and $S_2$ are divided into
fragments of lengths exactly $c$ (the last fragment can have a
length less than $c$).

{\sc Question:} Find a longest common subsequence of $S_1$ and $S_2$
that is arc-preserving. The allowed matches are those between
fragments at the same location.

The $c$-{\sc diagonal LAPCS} problem, $(c \geq 0)$, is an extension
of the $c$-{\sc fragment  LAPCS} problem, where base $S_2[i]$ is
allowed only to match bases in the range $S_1[i-c, i+c]$.

The $c$-{\sc diagonal LAPCS} and $c$-{\sc fragment  LAPCS} problems
are relevant in the comparison of conserved RNA sequences where we
already have a rough idea about the correspondence between bases in
the two sequences.

\section{Previous results}

It is shown in \cite{LCJW} that the $1$-{\sc fragment
LAPCS(cross\-ing, cross\-ing)} and $0$-{\sc diagonal
LAPCS(cross\-ing, cross\-ing)} are solvable in time $O(n)$. 
An overview on known {\bf NP}-completeness results for
$c$-{\sc diagonal LAPCS} and $c$-{\sc fragment  LAPCS} is given in
Figure 1.

\noindent
 \begin{figure}[ht]
\begin{center}
\begin{tabular}{|c|c|c|c|c|c|}
\hline
                        &{unlimited}      &{crossing}       &{nested}&{chain}          &{plain}         \\
\hline
     {unlimited}    &{\bf NP}-h \cite{LCJW}&{\bf NP}-h \cite{LCJW}&{\bf NP}-h \cite{LCJW}&?&?\\
\hline
     {crossing}     &---                  &{\bf NP}-h \cite{LCJW}&{\bf NP}-h \cite{LCJW}&?&?\\
\hline
     {nested}       &---                  &---                  &{\bf NP}-h \cite{LCJW}&?&?\\
\hline
\end{tabular}
\end{center}
\caption{{\bf NP}-completeness results for $c$-{\sc diagonal LAPCS}
(with $c \geq 1$) and $c$-{\sc fragment  LAPCS} (with $c \geq 2$)}
 \end{figure}

\section{The $c$-{\sc fragment LAPCS(un\-li\-mi\-ted,plain)} 
and the $c$-{\sc  diagonal LAPCS(un\-li\-mi\-ted,plain)} problem}

Let us consider the decision version of the $c$-{\sc fragment LAPCS}
problem.

{\sc Instance:} An alphabet $\Sigma$, a positive integer $k$,
annotated sequences $S_1$ and $S_2$, $S_1, S_2 \in {\Sigma}^{\ast}$,
 with arc sets
$P_1$ and $P_2$ respectively, where $S_1$ and $S_2$ are divided into
fragments of lengths exactly $c$ (the last fragment can have a
length less than $c$).

{\sc Question:} Is there a common subsequence $T$ of $S_1$ and $S_2$
that is arc-preserving, $\vert T \vert \geq k$? (The allowed matches
are those between fragments at the same location).

Similarly, we can define the decision version of the $c$-{\sc
diagonal LAPCS} problem.

\begin{theorem}
 If $\vert \Sigma \vert = 1$, then $1$-{\sc
fragment  LAPCS(un\-li\-mi\-ted, plain)} and $0$-{\sc diagonal
LAPCS(un\-li\-mi\-ted, plain)} are {\bf NP}-complete.
\end{theorem}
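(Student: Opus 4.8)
The plan is to establish NP-completeness by reducing from the classical \textsc{Independent Set} problem, exploiting the fact that a unary alphabet trivializes the letter-matching and leaves the arc-preserving condition as the only substantive constraint. Membership in NP is immediate: given a candidate set of matched positions, one checks in polynomial time that it has size at least $k$ and that no two matched positions of $S_1$ are joined by an arc, which is exactly the arc-preserving test against the empty arc set $P_2$.

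For hardness, first I would pin down what the specialization restrictions buy us. In both the $1$-\textsc{fragment} and the $0$-\textsc{diagonal} variants, base $S_2[i]$ may be matched only to $S_1[i]$; hence, taking $|S_1|=|S_2|=n$, every allowed matching is the identity on some subset $I \subseteq \{1, \dots, n\}$ of positions. Since $\Sigma = \{a\}$, the letters always agree, so any such $I$ is automatically a common subsequence at the symbol level. Because $S_2$ is plain we have $P_2 = \emptyset$, and the arc-preserving condition for the identity mapping collapses to the statement that for all $i_1,i_2 \in I$ we have $(i_1,i_2) \notin P_1$. In other words, $I$ must be an independent set in the graph whose vertices are the positions $1,\dots,n$ and whose edges are precisely the arcs of $P_1$.

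Given this, the reduction is direct. Starting from a graph $G=(V,E)$ with $V=\{1,\dots,n\}$ and a target $k$, I would set $\Sigma=\{a\}$, take $S_1=S_2=a^n$, put $P_1=\{(u,v):\{u,v\}\in E,\ u<v\}$ and $P_2=\emptyset$, and keep the same bound $k$. The \textsc{unlimited} arc level on $S_1$ is what makes this work, as it places no restriction on $P_1$ and therefore lets $P_1$ encode an arbitrary simple graph, while the \textsc{plain} level on $S_2$ supplies the empty arc set. The transformation is plainly polynomial, and by the observation above a common arc-preserving subsequence of length at least $k$ exists if and only if $G$ has an independent set of size at least $k$. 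The very same instance serves both the $1$-\textsc{fragment} and the $0$-\textsc{diagonal} formulations, since both impose the identical identity-matching constraint.

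Since the construction is so clean, I do not expect a deep technical obstacle; the point requiring the most care is verifying that the two specialization constraints genuinely force the identity matching and that the arc-preserving condition against an empty $P_2$ is logically equivalent to independence. I would write out the biconditional $(i_1,i_2)\in P_1 \Leftrightarrow (i_1,i_2)\in P_2=\emptyset$ explicitly, so that it is manifest that any arc between two selected positions is forbidden, which is exactly what identifies $P_1$ with the edge set of the graph in the \textsc{Independent Set} instance.
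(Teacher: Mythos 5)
Your proposal is correct and follows essentially the same route as the paper: a reduction from \textsc{Independent Set} with $S_1=S_2=a^n$, $P_1$ encoding the edge set, $P_2=\emptyset$, and the observation that the $1$-fragment (equivalently $0$-diagonal) restriction forces the identity matching, so arc-preservation against the empty $P_2$ is exactly independence in $G$. Your write-up is, if anything, slightly more careful than the paper's on two minor points: you state NP membership explicitly, and you note that arcs must be written with $u<v$ rather than literally setting $P_1=E$.
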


\begin{proof}
 It is easy to see that $1$-{\sc fragment
LAPCS(un\-li\-mi\-ted, plain)} = $0$-{\sc diagonal
LAPCS(un\-li\-mi\-ted, plain)}.

Let $G=(V,E)$ be an undirected graph, and let $I \subseteq V$. We
say that the set $I$ is independent if whenever $i, j \in I$ then
there is no edge between $i$ and $j$. We make use of the following
problem:

{\sc Independent Set (IS)}: {\sc Instance:} A graph $G=(V,E)$, a
positive integer $k$.

{\sc Question:} Is there an independent set $I$, $I \subseteq V$,
with $\vert I \vert \geq k$?

{\sc IS} is {\bf NP}-complete (see \cite{Pap}).

Let us suppose that $\Sigma = \{ a \}$. We will show that {\sc IS}
can be polynomially reduced to problem $1$-{\sc fragment
LAPCS(un\-li\-mi\-ted, plain)}.

Let $\langle G=(V,E), V= \{ 1, 2, \dots , n \} , k \rangle$ be an
instance of {\sc IS}. Now we transform an instance of the {\sc IS}
problem to an instance of the $1$-{\sc fragment
LAPCS(un\-li\-mi\-ted, plain)} problem as follows.

$\bullet$ $S_1=S_2=a^n$.

$\bullet$ $P_1=E, P_2=\emptyset$.

$\bullet$ $\langle (S_1,P_1), (S_2,P_2) , k \rangle$.

First suppose that the graph $G$ has an independent set $I$ of size
$k$. By definition of independent set, $(i,j) \notin E$ for each $i,
j \in I$. For a given subset $I$, let
$$M= \{ (i, i) : i \in I \}.$$
Since $I$ is an independent set, if $(i,j) \in E=P_1$ then either
$(i,i) \notin M$ or $(j,j) \notin M$. This preserves arcs since
$P_2$ is empty. Clearly, $S_1[i]=S_2[i]$ for each $i \in I$, and the
allowed matches are those between fragments at the same location.
Therefore, there is a common subsequence $T$ of $S_1$ and $S_2$ that
is arc-preserving, $\vert T \vert =k$, and the allowed matches are
those between fragments at the same location.

Now suppose that there is a common subsequence $T$ of $S_1$ and
$S_2$ that is arc-preserving, $\vert T \vert =k$, and the allowed
matches are those between fragments at the same location. In this
case there is a valid mapping $M$, with $\vert M \vert =k$. Since
$c=1$, it is easy to see that if $(i,j) \in M$ then $i=j$. Let
$$I=
\{ i : (i,i) \in M \}.$$ Clearly,
$$\vert I \vert = \vert M \vert
=k.$$ Let $i_1$ and $i_2$ be any two distinct members of $I$. Then
let $(i_1,j_1), (i_2,j_2) \in M$. Since
$$i_1=j_1, i_2=j_2, i_1 \not=
i_2,$$ it is easy to see that $j_1 \not= j_2$. Since $P_2$ is empty,
$(j_1,j_2) \notin P_2$, so $(i_1,i_2) \notin P_1$. Since $P_1=E$,
the set $I$ of vertices is a size $k$ independent set of $G$.
\end{proof}

\section{The $c$-{\sc fragment  LAPCS(cross\-ing,cha\-in)} 
and the $c$-{\sc di\-a\-go\-nal  LAPCS(cross\-ing,cha\-in)} problem}

\begin{theorem} If $\vert \Sigma \vert = 2$, then $1$-{\sc
fragment  LAPCS(cross\-ing,cha\-in)} and $0$-{\sc diagonal
LAPCS(cross\-ing,cha\-in)} are {\bf NP}-complete.
\end{theorem}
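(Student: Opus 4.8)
The plan is to reduce from {\sc Independent Set}, mirroring the structure of the previous theorem but now exploiting the fact that $|\Sigma|=2$ lets me use a second symbol to encode edge information directly into the sequences, rather than hiding it entirely in the arc set $P_1$. The reason a richer alphabet is needed is that in the {\sc chain} target sequence $S_2$ we are sharply restricted: arcs in $S_2$ may not share endpoints, may not cross, and may not nest, so $S_2$ can carry essentially no structural information about the graph. Therefore $G=(V,E)$ with $V=\{1,\dots,n\}$ must again be encoded into $S_1$ together with its arc set $P_1$ (which is only required to satisfy the {\sc crossing} restriction, i.e.\ no shared endpoints), while $S_2$ plays the role of a structureless template against which matches are counted. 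With $\Sigma=\{a,b\}$ available, I would let the two symbols distinguish ``vertex positions'' from ``control positions,'' so that the arc-preservation condition can force at most one endpoint of each edge-arc to participate in the mapping.

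First I would set up the reduction. Given $\langle G,V=\{1,\dots,n\},k\rangle$, I would build $S_1$ so that each vertex $i$ contributes a short gadget block (built from $a$'s and $b$'s) and each edge $(i,j)\in E$ contributes an arc in $P_1$ linking two designated bases inside the gadgets for $i$ and $j$. The second symbol $b$ would be used to make certain bases forcibly matched (because in the $0$-diagonal / $1$-fragment setting the only legal match for position $p$ is position $p$ itself, and a match requires equal symbols), thereby pinning down the skeleton of any solution and leaving only the genuine vertex-selection bases free. I would choose $S_2$ to agree with $S_1$ on all the forced $b$-positions but to omit or alter arcs so that $P_2$ is empty or a trivial {\sc chain}. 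The target length $k'$ would be the number of forced matches plus $k$, so that achieving $k'$ is equivalent to selecting $k$ vertex bases on top of the mandatory skeleton.

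Second I would verify correctness in both directions. For the forward direction, from an independent set $I$ of size $k$ I match all forced $b$-bases and, for each $i\in I$, the free vertex-base of gadget $i$; because $I$ is independent no edge-arc of $P_1$ has both endpoints selected, and since $P_2$ carries no conflicting arc the map is arc-preserving, giving a common subsequence of length $k'$. For the reverse direction, from an arc-preserving common subsequence of length $k'$ I would first argue that all forced positions must be matched (they are the only way to reach the count $k'$, since $0$-diagonal matching equates position to position and unequal symbols cannot match), hence exactly $k$ vertex-bases are selected; then if two selected vertices $i,j$ were adjacent, the arc $(i,j)\in P_1$ would have both endpoints in the mapping while the corresponding pair in $S_2$ carries no arc, violating the biconditional $(i_1,i_2)\in P_1\Leftrightarrow(j_1,j_2)\in P_2$, so the selected set is independent.

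The main obstacle I expect is the simultaneous satisfaction of the two asymmetric arc constraints together with the diagonal matching rule. The {\sc crossing} restriction on $P_1$ forbids two arcs from sharing an endpoint, so each vertex gadget must provide a \emph{distinct} base for every incident edge, which means the gadget width must grow with the degree of the vertex and all positions must still line up under the $0$-diagonal rule. Designing the gadget so that (a) every edge-arc gets its own pair of non-shared, non-crossing-violating endpoints in $P_1$, (b) the {\sc chain}/empty structure of $P_2$ forces the desired arc-preservation behavior, and (c) membership of the original IS instance is reflected \emph{exactly} by the optimum — all while the alphabet has only two letters and matches are forced to the diagonal — is the delicate part, and getting the counting of forced versus free matches precisely right is where the proof must be most careful.
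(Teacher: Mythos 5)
Your plan mirrors the paper's own approach (a reduction from \textsc{Independent Set} with per-vertex gadgets over $\{a,b\}$, edge information carried by arcs of $P_1$, and a trivial chain or empty $P_2$), but the step you yourself flag as ``the delicate part'' is not merely delicate --- it cannot be carried out, and the obstacle you identify is exactly where every reduction of this shape collapses. In the $1$-fragment ($c=1$) / $0$-diagonal ($c=0$) setting the mapping is forced to be a partial identity: position $p$ of $S_2$ can only be matched to position $p$ of $S_1$. Hence a feasible solution is nothing more than a set $M$ of positions with $S_1[p]=S_2[p]$ such that no pair $(p,q)\in (P_1\setminus P_2)\cup(P_2\setminus P_1)$ has both endpoints in $M$: arc preservation can only \emph{forbid} pairs, it can never \emph{force} a position into $M$. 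This already breaks your ``forced matches'' device --- a solution may freely drop $b$-positions and pick up other positions, so reaching the count $k'$ does not pin down the skeleton. Worse, the \textsc{crossing} restriction on $P_1$ and the \textsc{chain} restriction on $P_2$ each forbid shared endpoints, so every position lies on at most one arc of each set; the conflict graph $(P_1\setminus P_2)\cup(P_2\setminus P_1)$ therefore has maximum degree $2$, i.e.\ it is a disjoint union of paths and cycles, and the optimum equals the number of matchable positions minus a minimum vertex cover of that graph, computable in linear time. In particular, once each edge of $G$ is given its own private pair of arc endpoints --- which, as you note, the crossing restriction forces for vertices of degree greater than one --- the optimum degenerates to (matchable positions) $-\,|E|$ and carries no information about the independence number. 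Since the target problem is thus polynomial-time solvable, no reduction of the kind you sketch (or any other) can establish its \textbf{NP}-hardness unless $\mathbf{P}=\mathbf{NP}$.

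For comparison, the paper's own proof fails at the same point. Its Case II takes $S_1=S_2=(ba^nb)^n$, puts edge arcs and per-block arcs in $P_1$ and only the per-block arcs in $P_2$; the forward direction (independent set $\Rightarrow$ long subsequence) is fine, but the converse is a verbatim copy of the converse of Theorem 1 (it asserts ``$P_2$ is empty'' and ``$P_1=E$'', both false for this construction) and proves nothing. Indeed the claimed equivalence is false: for $G=C_4$ (so $n=4$, $|E|=4$, independence number $2$) and $k=3$, the optimum is $n(n+2)-|E|=20\geq k(n+2)=18$, yet $G$ has no independent set of size $3$. The theorem even contradicts the result quoted in the paper's Section 3, that $1$-fragment \textsc{LAPCS}(\textsc{crossing},\textsc{crossing}) is solvable in $O(n)$ time, since every \textsc{chain} arc set is in particular a \textsc{crossing} arc set. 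So the honest conclusion of your analysis should be that neither your argument nor the paper's can be completed: the statement is not just unproven but, unless $\mathbf{P}=\mathbf{NP}$, false.
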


\begin{proof}
 It is easy to see that $1$-{\sc fragment
LAPCS(cross\-ing, chain)} = $0$-{\sc diagonal LAPCS(cross\-ing,
chain)}.

Let us suppose that $\Sigma = \{ a, b \}$. We will show that {\sc
IS} can be polynomially reduced to problem $1$-{\sc fragment
LAPCS(cross\-ing, chain)}.

Let $\langle G=(V,E), V= \{ 1, 2, \dots , n \} , k \rangle$ be an
instance of {\sc IS}. Note that {\sc IS} remains {\bf NP}-complete
when restricted to connected graphs with no loops and multiple
edges. Let $G=(V,E)$ be such a graph. Now we transform an instance
of the {\sc IS} problem to an instance of the $1$-{\sc fragment
LAPCS(cross\-ing, chain)} problem as follows.

\newpage 
\noindent
There are two cases to consider.

\medskip
\noindent {\bf Case I}. $k>n$

$\bullet$ $S_1=S_2=a$

$\bullet$ $P_1=P_2=\emptyset$

$\bullet$ $\langle (S_1,P_1), (S_2,P_2) , k \rangle$

Clearly, if $I$ is an independent set, then $I
\subseteq V$ and $\vert I \vert \leq \vert V \vert =n$. Therefore,
there is no an independent set $I$, with $\vert I \vert \geq k$.

Since $k>n$ and $n \in \{ 1, 2, \dots  \}$, it is easy to see that
$k>1$. Since $S_1=S_2=a$ and $P_1=P_2=\emptyset$, $T=a$ is the
longest arc-preserving common subsequence. Therefore, there is no an
arc-preserving common subsequence $T$ such that $\vert T \vert \geq
k$.

\medskip\noindent {\bf Case II}. $k \leq n$

$\bullet$ $S_1=S_2=(ba^{n}b)^n$

$\bullet$ Let $\alpha < \beta$. Then

$$( \alpha , \beta ) \in P_1 \Leftrightarrow
[ \exists i \in \{ 1, 2, \dots , n \} \exists j \in \{ 1, 2, \dots ,
n \}$$
$$((i,j) \in E \wedge  \alpha =(i-1)(n+2)+j+1 \wedge$$
$$\wedge \beta = (j-1)(n+2)+i+1)] \vee$$
$$\vee [ \exists i  \in \{ 1, 2, \dots , n \}
( \alpha = (i-1)(n+2)+1 \wedge \beta = i(n+2))],$$

$$( \alpha , \beta ) \in P_2 \Leftrightarrow
\exists i  \in \{ 1, 2, \dots , n \}$$
$$( \alpha = (i-1)(n+2)+1
\wedge \beta = i(n+2)).$$

$\bullet$ $\langle (S_1,P_1), (S_2,P_2) , k(n+2) \rangle$

First suppose that  $G$ has an independent
set $I$ of size $k$. By definition of independent set, $(i,j) \notin
E$ for each $i, j \in I$. For a given subset $I$, let
$$M= \{ (j, j) : j=(n+2)(i-1) +l, i \in I , $$
$$l \in \{ 1, 2, \dots , n+2 \} \} .$$

Let $(j,j) \in M$, and there exist $i$ such that $j=(n+2)(i-1) +1$.
By definition of $M$,
$$((n+2)(i-1) +1,(n+2)(i-1) +1) \in M \Leftrightarrow$$
$$\Leftrightarrow
((n+2)i,(n+2)i) \in M.$$ By definition of $P_l$, $((n+2)(i-1)
+1,(n+2)i) \in P_l$ where $l=1,2$. Let $(j,j) \in M$, and there
exist $i$ such that $j=(n+2)i$. By definition of $M$,
$$((n+2)i,(n+2)i) \in M \Leftrightarrow$$
$$\Leftrightarrow
((n+2)(i-1) +1,(n+2)(i-1) +1) \in M.$$ By definition of $P_l$,
$$((n+2)(i-1) +1,(n+2)i) \in P_l$$
where $l=1,2$. Let $(j,j) \in M$, and
$$j=(n+2)(i-1) +l$$
where $1<l<n+2$. By definition of $M$, $i \in I$. Since $I$ is an
independent set, if $(i,l-1) \in E$ then $l-1 \notin I$. Since
$$1<l<n+2,$$ by definition of $P_1$, either
$$((n+2)(i-1) +l,(n+2)(l-2) +i+1) \in P_1$$
or
$$((n+2)(i-1) +l,t)
\notin P_1$$ for each $t$. Since
$$1<l<n+2,$$
by definition of $P_2$,
$$((n+2)(i-1) +l,t) \notin P_2$$
for each $t$. If
$$((n+2)(i-1)
+l,(n+2)(l-2) +i+1) \in P_1,$$ then in view of $l-1 \notin I$,
$$((n+2)(l-2) +i+1,(n+2)(l-2) +i+1) \notin M.$$
This preserves arcs. Since $\vert I \vert =k$, it is easy to see
that
$$\vert M \vert
=k(n+2).$$ Clearly, $S_1[i]=S_2[i]$ for each $i \in I$, and the
allowed matches are those between fragments at the same location.
Therefore, there is a common subsequence $T$ of $S_1$ and $S_2$ that
is arc-preserving, $\vert T \vert =k(n+2)$, and the allowed matches
are those between fragments at the same location.

Now suppose that there is a common subsequence $T$ of $S_1$ and
$S_2$ that is arc-preserving, $\vert T \vert =k$, and the allowed
matches are those between fragments at the same location. In this
case there is a valid mapping $M$, with $\vert M \vert =k$. Since
$c=1$, it is easy to see that if $(i,j) \in M$ then $i=j$. Let $I=
\{ i : (i,i) \in M \}$. Clearly, $\vert I \vert = \vert M \vert =k$.
Let $i_1$ and $i_2$ be any two distinct members of $I$. Then let
$(i_1,j_1), (i_2,j_2) \in M$. Since $i_1=j_1, i_2=j_2, i_1 \not=
i_2$, it is easy to see that $j_1 \not= j_2$. Since $P_2$ is empty,
$(j_1,j_2) \notin P_2$, so $(i_1,i_2) \notin P_1$. Since $P_1=E$,
the set $I$ of vertices is a size $k$ independent set of $G$.
\end{proof}

\section{Conclusions}

In this paper, we considered  two special cases of the {\sc LAPCS}
problem, which were first studied in \cite{LCJW}. We have shown 
that the decision version of the $1$-{\sc
fragment  LAPCS(cross\-ing,cha\-in)} and the decision version of the $0$-{\sc diagonal
LAPCS(cross\-ing,cha\-in)} are {\bf NP}-complete for some fixed alphabet $\Sigma$ such that
$\vert \Sigma \vert = 2$.
Also we have shown that if $\vert \Sigma \vert = 1$, then the decision version of the $1$-{\sc
fragment  LAPCS(un\-li\-mi\-ted, plain)} and the decision version of the $0$-{\sc diagonal
LAPCS(un\-li\-mi\-ted, plain)} are {\bf NP}-complete.
This results answers some open questions in \cite{G} (see Table 4.2. in \cite{G}).

\section*{Acknowledgements}

The work was partially supported by Grant of President of the Russian
Federation MD-1687.2008.9 and Analytical Departmental Program
  ``Developing the scientific potential of high school'' 2.1.1/1775.



\bigskip
\rightline{\emph{Received:  November 17, 2010{\tiny \raisebox{2pt}{$\bullet$\!}} Revised: March 11, 2011}} 


\begin{thebibliography}{99}

\bibitem{N2}
J. \href{http://www-ti.informatik.uni-tuebingen.de/~alber}{Alber}, 
J. \href{http://www-ti.informatik.uni-tuebingen.de/~gramm/}{Gramm}, 
J. \href{http://people.mmci.uni-saarland.de/~jguo/}{Guo},
R. \href{http://theinf1.informatik.uni-jena.de/~niedermr/}{Niedermeier}, 
Computing of two sequences with nested arc notations, \emph{Theoret.} 
\href{http://www.sciencedirect.com/science/journal/03043975}{\textit{Comput. Sci.}} 
\textbf{312,} 2-3 (2004) 337--358. 

\bibitem{BMR}
V. \href{http://cseweb.ucsd.edu/~vbafna/}{Bafna}, 
S. \href{http://www.cs.rutgers.edu/~muthu/}{Muthukrishnan}, 
R. \href{http://www2.tepper.cmu.edu/andrew/ravi/}{Ravi}, 
Comparing similarity between RNA strings, 
{\it Proc. 6th Annual Symposium on
Combinatorial Pattern Matching}, \emph{Lecture Notes in}
\href{http://www.springer.com/computer/lncs?SGWID=0-164-0-0-0}{\emph{Comput. Sci.}} \textbf{937}  (1995) 1--16.

\bibitem{N3}
G. \href{http://www-igm.univ-mlv.fr/~gblin/LIGM_LIKE/}{Blin}, 
H. \href{http://www.lifl.fr/~touzet/}{Touzet}, 
How to compare arc-annotated sequences: The alignment hierarchy,  
{\it Proc.  13th International Symposium on String Processing 
and Information Retrieval (SPIRE)}, \emph{Lecture Notes in}
\href{http://www.springer.com/computer/lncs?SGWID=0-164-0-0-0}{\emph{Comput. Sci.}} \textbf{4209} (2006) 291--303. 

\bibitem{N13}
G. \href{http://www-igm.univ-mlv.fr/~gblin/LIGM_LIKE/}{Blin}, 
M. \href{http://igm.univ-mlv.fr/~mac/}{Crochemore}, 
S. \href{http://www-igm.univ-mlv.fr/~vialette/}{Vialette}, 
Algorithmic aspects of arc-annotated sequences, in: \emph{Algorithms in Computational} 
\href{http://eu.wiley.com/WileyCDA/WileyTitle/productCd-0470505192.html}{\textit{Molecular Biology:}} 
 \emph{Techniques, Approaches and Applications} 
(ed. M. \href{http://mouradelloumi.homestead.com/home.html}{Elloumi}, 
A. Y. \href{http://sydney.edu.au/engineering/it/~zomaya/}{Zomaya}), 
\href{http://eu.wiley.com/WileyCDA/}{John Wiley} \& Sons, Inc., Hoboken, NJ, 2011, pp. 171--183. 

\bibitem{BDFW}
H. L. \href{http://www.cs.uu.nl/staff/hansb.html}{Bodlaender}, 
R. G.  \href{http://www.victoria.ac.nz/smsor/staff/rod-downey.aspx}{Downey}, 
M. R.  \href{http://www.mrfellows.net/}{Fellows}, 
H. T. \href{http://web.cs.mun.ca/~harold/}{Wareham},
The parameterized complexity of sequence alignment and consensus, \emph{Theoret.}
\href{http://www.sciencedirect.com/science/journal/03043975}{\textit{Comput. Sci.}} 
\textbf{147,} 1-2 (1995) 31--54.

\bibitem{N1}
H. L. \href{http://www.cs.uu.nl/staff/hansb.html}{Bodlaender}, 
R. G.  \href{http://www.victoria.ac.nz/smsor/staff/rod-downey.aspx}{Downey}, 
M. R.  \href{http://www.mrfellows.net/}{Fellows}, 
M. T. Hallett,
H. T. \href{http://web.cs.mun.ca/~harold/}{Wareham},
Parameterized complexity analysis in computational biology, \emph{Computer Applications in the}   
\href{http://www.researchgate.net/journal/0266-7061_Computer_applications_in_the_biosciences_CABIOS}{\textit{Biosciences}} \textbf{11,} 1 (1995) 49--57.

\bibitem{N4}
J. Chen, X. Huang, I. A. Kanj,   G. Xia,  {W-hardness 
under linear FPT-reductions: structural properties and further applications},
\emph{Proc. of COCOON}, Kunming, China, 2005, pp. 975--984. 


\bibitem{CLRS}
T. H. \href{http://www.cs.dartmouth.edu/~thc/}{Cormen},
C. E. \href{http://people.csail.mit.edu/cel/}{Leiserson},
R. L. \href{http://people.csail.mit.edu/rivest/}{Rivest},
C. \href{http://www.columbia.edu/~cs2035/}{Stein}, 
\emph{Introduction to \href{http://mitpress.mit.edu/algorithms}{Algorithms}}, Third edition, The 
\href{http://mitpress.mit.edu/}{MIT} Press, Cambridge, Massachusetts, 2009.

\bibitem{CM}
F. Corpet, B. Michot, Rnalign program: alignment of RNA sequences
using both primary and secondary structures, \emph{Computer  Applications in the} 
\href{http://www.researchgate.net/journal/0266-7061_Computer_applications_in_the_biosciences_CABIOS}{\textit{Biosciences}} 
\textbf{10,} 4 (1994)  389--399.

\bibitem{N5}
P. \href{http://www.cse.chalmers.se/~ptr/}{Damaschke}, 
A remark on the subsequence problem for arc-annotated sequences 
with pairwise nested arcs, 
\href{http://www.sciencedirect.com/science/journal/00200190}{\textit{Inform. }} \emph{Process. Lett.} 
\textbf{100,} 2 (2006) 64--68.  


\bibitem{DM93a}
W. H. E. Day,  F. R. McMorris, Discovering consensus molecular
sequences, in:  
\textit{Information and Classification~-- Concepts, Methods, and
Applications} 
(ed. O. Opitz, B. Lausen, R. Klar),
 \href{http://www.springer.com/?SGWID=0-102-0-0-0}{Springer-Verlag}, 
 Berlin, 1993, pp. 393--402.

\bibitem{DM93b}
W. H. E. Day, F. R. McMorris, The computation of consensus
patterns in DNA sequences, 
\href{http://www.sciencedirect.com/science/journal/08957177}{\textit{Math. 
Comput.}} \emph{Modelling} \textbf{17,} 10 (1993) 49--52.

\bibitem{Ev1}
P. A. \href{http://www.cs.unb.ca/~pevans/}{Evans},   
{\it Algorithms and Complexity for Annotated Sequence
Analysis},  PhD Thesis, University of 
\href{http://www.uvic.ca/}{Victoria}, Victoria,
  1999.


\bibitem{Ev2}
P. A. \href{http://www.cs.unb.ca/~pevans/}{Evans}, 
Finding common subsequences with arcs and pseudoknots,
{\it Proc. 10th Annual Symposium on Combinatorial Pattern
Mat\-ching (CPM'99)}, \emph{Lecture Notes in}  
\href{http://www.springer.com/computer/lncs?SGWID=0-164-0-0-0}{\emph{Comput. Sci.}} \textbf{1645} (1999) 270--280.

\bibitem{N6}
J. \href{http://www-ti.informatik.uni-tuebingen.de/~gramm/}{Gramm}, 
J. \href{http://people.mmci.uni-saarland.de/~jguo/}{Guo},
R. \href{http://theinf1.informatik.uni-jena.de/~niedermr/}{Niedermeier}, 
Pattern matching for arc-annotated sequences, 
\href{http://talg.acm.org/}{\textit{ACM Trans.}} \emph{Algorithms}  
\textbf{2,} 1 (2006) 44--65.

\bibitem{G}
J. \href{http://people.mmci.uni-saarland.de/~jguo/}{Guo}, 
{\it Exact algorithms for the longest common subsequence problem
for arc-annotated sequences}, Master Thesis,  
\href{http://www.uni-tuebingen.de/}{Eberhard-Karls-Universit\"at}, T{\" u}bingen, 2002.

\bibitem{Gus}
D. \href{http://csiflabs.cs.ucdavis.edu/~gusfield/}{Gusfield}, \emph{Algorithm on  
\href{http://www.amazon.com/exec/obidos/tg/detail/-/0521585198/qid=1056308893/sr=8-1/ref=sr_8_1/102-4875758-2141731?v=glance&s=books&n=507846}{Strings, Trees, and Sequences:} Computer Science and Computational Biology,} 
\href{http://www.cambridge.org/}{Cambridge}  University Press, Cambridge, 1997.

\bibitem{Hir}
D. S. \href{http://www.ics.uci.edu/~dan/}{Hirschberg}, 
{\it The Longest Common Subsequence Problem}, PhD
Thesis, \href{http://www.princeton.edu}{Princeton}  University, Princeton, 1975.

\bibitem{Hir83}
D. S. \href{http://www.ics.uci.edu/~dan/}{Hirschberg}, Recent results on the complexity of common
subsequence problems, in: 
\href{http://www.amazon.com/Time-Warps-String-Edits-Macromolecules/dp/1575862174}{
\textit{Time Warps,}} \emph{String Edits, and Macromolecules: The Theory and
Practice of Sequence Comparison}
(ed. D. \href{http://albuquerque.bioinformatics.uottawa.ca/}{Sankoff}, J. B. \href{http://en.wikipedia.org/wiki/Joseph_Kruskal}{Kruskal}),
 \href{http://www.pearsonhighered.com/}{Addison-Wesley}
Pub\-lish\-ing Com\-pa\-ny, Reading/Menlo Park, NY, 1983, pp. 325--330.

\bibitem{N7}
C. S. \href{http://www.dcs.kcl.ac.uk/staff/csi/}{Iliopou{\' e}os},
M. S. \href{http://teacher.buet.ac.bd/msrahman/}{Rahman}, 
Algorithms for computing variants of the longest common subsequence problem, \emph{Theoret.} 
\href{http://www.sciencedirect.com/science/journal/03043975}{\textit{Comput. Sci.}} 
\textbf{395,} 2-3 (2008) 255--267.


\bibitem{IF92}
R. W. \href{http://www.dcs.gla.ac.uk/~rwi/}{Irving}, 
C. B. \href{http://www.dcs.gla.ac.uk/~fraserc/}{Fraser}, 
Two algorithms for the longest
common subsequence of three (or more) strings, 
{\it Proc.  Third Annual Sym\-po\-si\-um on Combinatorial
Pattern Matching}, \emph{Lecture Notes in}
\href{http://www.springer.com/computer/lncs?SGWID=0-164-0-0-0}{\emph{Comput. Sci.}} \textbf{644} (1992) 214--229.

\bibitem{JLMZ}
T. Jiang, G.-H.  Lin, 
B. \href{http://www.cs.uwaterloo.ca/~binma/}{Ma}, 
K. Zhang, The longest common
subsequence problem for arc-annotated sequences, {\it Proc. 11th Annual Symposium on Combinatorial Pattern Matching (CPM
2000)}, \emph{Lecture Notes in}
\href{http://www.springer.com/computer/lncs?SGWID=0-164-0-0-0}{\emph{Comput. Sci.}} \textbf{1848} (2000) 154--165.

\bibitem{N11}
T. Jiang, G.-H.  Lin, 
B. \href{http://www.cs.uwaterloo.ca/~binma/}{Ma}, K. Zhang, The longest common subsequence problem
for arc-annotated sequences, 
\href{http://www.sciencedirect.com/science/journal/15708667}{\textit{J. Discrete 
}} \emph{Algorithms} \textbf{2,} 2 (2004) 257--270.


\bibitem{LMW}
M. \href{http://learn.tsinghua.edu.cn:8080/2008310470/mingqiang.htm}{Li}, 
B. \href{http://www.cs.uwaterloo.ca/~binma/}{Ma},  L. Wang, Near optimal multiple alignment within a
band in polynomial time, 
\emph{Proc. Thirty-second Annual ACM Symposium on Theory of Computing (STOC'00)},
Portland, OR, 2000, pp. 425--434.

\bibitem{LCJW}
G. H. Lin, Z. Z. Chen, T. Jiang, J. J. Wen, The longest common
subsequence problem for sequences with nested arc annotations, {\it
Pro\-ce\-e\-dings of the 28th International Colloquium on Automata,
Lan\-gu\-a\-ges and Programming}, \emph{Lecture Notes in}
\href{http://www.springer.com/computer/lncs?SGWID=0-164-0-0-0}{\emph{Comput. Sci.}} \textbf{2076} (2001) 444--455.


\bibitem{LF78}
S. Y. Lu, K. S. Fu, A sentence-to-sentence clustering procedure
for pattern analysis, 
\href{http://www.ieeesmc.org/publications/index.html}{\textit{IEEE 
Transactions}}  \emph{on Systems, Man, and Cybernetics} \textbf{8,} 5 (1978)  381--389.

\bibitem{Mai78}
D. Maier, The complexity of some problems on subsequences and
su\-per\-sequences, 
\href{http://jacm.acm.org/}{\textit{J. ACM}} \textbf{25,} 2 (1978)
322--336.

\bibitem{N12}
D. Marx, I. Schlotter, Parameterized complexity of the arc-preserving subsequence problem,
{\it Proc. 36th International Workshop on Graph Theoretic Concepts 
in Computer Science (WG 2010)}, \emph{Lecture Notes in} 
\href{http://www.springer.com/computer/lncs?SGWID=0-164-0-0-0}{\emph{Comput. Sci.}} \textbf{6410} (2010) 244--255. 

\bibitem{N8}
A. Ouangraoua, 
C. \href{http://www.cecm.sfu.ca/~cchauve/}{Chauve}, V. Guignon, S. Hamel, 
New algorithms for aligning  nested arc-annotated sequences, 
Laboratoire Bordelais de Recherche en Informatique, Research Report RR-1443-08,  
Universit{\' e} \href{http://www.u-bordeaux3.fr}{Bordeaux}, 2008.

\bibitem{N9}
A. Ouangraoua, V. Guignon, S. Hamel, 
C. \href{http://www.cecm.sfu.ca/~cchauve/}{Chauve}, 
A new algorithm for aligning  nested arc-annotated 
sequences under arbitrary weight schemes, \emph{Theoret.} 
\href{http://www.sciencedirect.com/science/journal/03043975}{\textit{Comput. Sci.}} 
\textbf{412,} 8-10 (2011) 753--764.

\bibitem{Pap}
C. H. \href{http://www.cs.berkeley.edu/~christos/}{Papadimitriou},  \emph{Computational}
\href{http://www.amazon.com/Computational-Complexity-Christos-H-Papadimitriou/dp/0201530821}{\emph{complexity}}, 
\href{http://www.pearsonhighered.com/}{Addison-Wesley}
Pub\-lish\-ing Com\-pa\-ny, Reading/Menlo Park, NY, 1994.


\bibitem{Pev92}
P. A. \href{http://cseweb.ucsd.edu/~ppevzner/}{Pevzner}, 
Multiple alignment, communication cost, and graph
matching, 
\href{http://www.siam.org/journals/siap.php}{\textit{SIAM J.}} \emph{Appl. Math.}  
\textbf{52,} 6
(1992)  1763--1779.

\bibitem{N10}
K. \href{http://homepages.cwi.nl/~pietrzak/}{Pietrzak}, 
On the parameterized complexity of the fixed alphabet shortest
common supersequence and longest common subsequence problems, \emph{J. Comput.}
\href{http://www.sciencedirect.com/science/journal/00220000}{\textit{System Sci.}} \textbf{67,} 4 (2003) 757--771.


\bibitem{San72}
D. \href{http://albuquerque.bioinformatics.uottawa.ca/}{Sankoff}, 
Matching comparisons under deletion/insertion
constraints, 
\href{http://www.pnas.org/}{\textit{Proc. Natl.}} \emph{Acad. Sci. USA} 
\textbf{69,} 1 (1972) 4--6.

\bibitem{WF}
R. A. Wagner, M. J. Fischer, The string-to-string correction
problem, 
\href{http://jacm.acm.org/}{\textit{J. ACM}} \textbf{21,} 1 (1974)  168--173.

\end{thebibliography}
\end{document}